\newtheorem*{lemma*}{Lemma}
\newtheorem*{exmp*}{Example}
\newcommand{\titletext}{On Continuous-Time White Phase Noise Channels}
\newcommand*{\diff}{\mathop{}\mathopen{\mathrm{d}}}
\newcommand{\Exp}{\mathsf E}
\newcommand{\expect}[1]{{\Exp}\left[#1\right]}
\newcommand{\Var}{\mathsf{Var}}
\newcommand{\variance}[1]{{\Var}\left[#1\right]}
\begin{document}

\sloppy
%
\title{\titletext}
%

\author{
\authorblockN{Luca Barletta}
\authorblockA{Institute for Advanced Study\\
Technische Universit\"{a}t M\"{u}nchen \\
D-85748 Garching, Germany\\
luca.barletta@tum.de} \and
\authorblockN{Gerhard Kramer}
\authorblockA{Institute for Communications Engineering\\
Technische Universit\"{a}t M\"{u}nchen\\
D-80333 Munich, Germany\\
gerhard.kramer@tum.de}}

%
%
%
%
%
%
\maketitle
\begin{abstract}
A continuous-time model for the additive white Gaussian noise (AWGN) channel in the presence
of white (memoryless) phase noise is proposed and discussed.
It is shown that for linear modulation the output of the baud-sampled filter matched to the shaping waveform
represents a sufficient statistic. The analysis shows that the phase noise channel has the same information rate
as an AWGN channel but with a penalty on the average signal-to-noise ratio, the amount of penalty depending on
the phase noise statistic.
\end{abstract}
%
%
%
%
\section{Introduction}

A major source of impairment in
radio and optical channels is multiplicative phase noise that arises due to instabilities of local
oscillators used in up- and down-conversion.
Estimation and compensation of phase noise is a classic problem in communication systems theory.
In fiber-optic communication systems, phase noise is present due to 
instabilities of laser oscillators~\cite{Foschini88} or due to cross-phase
modulation in multi-channel systems~\cite{Essiambre2010}.

In the literature the phase noise process is often modeled as a discrete-time process that
is output by a baud-sampled matched filter~\cite{Spalvieri2011, Barletta2012}. This 
approximate model is valid as long as the variations of the phase noise process
are small in one symbol period and the signal-to-noise ratio (SNR) is low or moderate~\cite{Ghozlan2013}. In general, and especially for phase
noise characterized by large linewidths, the channel impaired by phase noise should
be modeled and studied in continuous-time, in order to account for the effects
of filtering prior to sampling~\cite{GhozlanISIT2013, Ghozlan2013, GhozlanISIT2014}.

Analytical bounds on the capacity of phase noise channels are given in~\cite{Katz2004}, and for high SNR in~\cite{Lapidoth2002}.
Numerical bounds on the information rate transferred through discrete-time Wiener phase noise channels
are provided in~\cite{Barletta2012, Barletta2011, Barbieri2011}, while for the continuous-time counterpart analytical lower bounds are given
in~\cite{GhozlanISIT2013, GhozlanISIT2014} and an analytical upper bound in~\cite{CrownCom2014}. Numerical results for the achievable information rates in the case
of discrete-time phase noise with an arbitrary spectrum are given in~\cite{Barletta2013, BarlettaTIT2014}.

Few papers in the literature deal with continuous-time phase noise and, apart from
trivial cases, the existence of a finite-dimensional sufficient statistic for these channels 
has not been investigated yet. The aim of this paper is to attempt to model the continuous-time additive white Gaussian noise (AWGN) channel
in the presence of \emph{white} phase noise, and to find a (finite-dimensional) sufficient statistic. 

The motivation for studying white phase noise is that there are problems for which the phase varies more rapidly than the bandwidth of the receiving device, 
e.g., a ``square law'' detector that necessarily has limited bandwidth. In such cases, the device will ``lose'' energy and we wish to model this effect.

The paper is organized as follows. In Sec.~\ref{Sec:previous} previous works on this topic are analyzed in detail.
The system model is introduced and discussed in Sec.~\ref{Sec:system}, while a sufficient statistic
is derived in Sec.~\ref{Sec:sufficient}. Results are discussed in Sec.~\ref{Sec:discussion}, and conclusions are drawn in Sec.\ref{Sec:conclusion}.

\section{Previous Work}\label{Sec:previous}
The output of a baud-sampled matched filter represents a sufficient statistic for 
the continuous-time AWGN channel with a constant phase shift given by a random parameter (see, e.g., \cite[Sec. 4.4.1]{VanTrees68}).

In the context of continuous-time Wiener phase noise channels, the authors of~\cite{Ghozlan2013} show through numerical simulations that 
a multi-sample receiver achieves significantly higher information rates than the baud-sampled matched filter, especially at high SNR.
The dimensionality of the sufficient statistic for this channel with memory has not been investigated.

In the context of continuous-time memoryless phase noise channels, the authors of~\cite[Eq. (28)]{Goebel2011} model the \emph{partially coherent} channel as
\begin{equation}
 Y(t) = X(t) e^{j\Theta(t)} + W(t), \qquad 0\le t\le T
\end{equation}
where $W$ is a complex-valued AWGN process and $\Theta$ is a white phase noise process used to model 
the nonlinear effects of cross-phase modulation in multichannel fiber-optic communication systems.
\emph{White} means that $\Theta(t_1)$ and $\Theta(t_2)$ are uncorrelated random variables 
for any $t_1, t_2$ in the interval $[0,T]$ with $t_1\ne t_2$. If, in addition, $\Theta$ is stationary and
$\Theta(t)$ is distributed
according to a wrapped Gaussian with zero mean and variance $\sigma^2$, then the autocorrelation function $R_{e^{j\Theta}}$ of
the multiplicative disturbance $e^{j\Theta(t)}$ is
\begin{align}
 R_{e^{j\Theta}}&=\expect{e^{j\Theta(t)}e^{-j\Theta(t+\tau)}}=\left\{\begin{array}{ll}
                                                     1 & \tau=0 \\
                                                     e^{-\sigma^2} & \tau\ne 0
                                                    \end{array}\right.\\
                &=e^{-\sigma^2}+\lim_{B\rightarrow\infty} (1-e^{-\sigma^2})\cdot \text{sinc}(B\tau).
\end{align}
The power spectral density (PSD) $S_y(f)$ of the received signal $Y(t)$ is given by
\begin{align}
 S_y(f) &={\cal F}\{R_y(\tau)\}= S_x(f)\star S_{e^{j\Theta}}(f)+S_w(f)\nonumber\\
 &= e^{-\sigma^2} S_x(f)+\lim_{B\rightarrow\infty} S_x(f)\star (1-e^{-\sigma^2})\frac{\text{rect}(f/B)}{B}\nonumber\\
 &\quad+S_w(f), \label{eq:psd}
\end{align}
where ${\cal F}\{\cdot\}$ denotes the Fourier transform and $\star$ the convolution operator. Equation~(\ref{eq:psd}) shows that
the PSD of the input signal $S_x(f)$ experiences a \emph{spectral loss}~\cite{Goebel2011}, that is a gain equal to $e^{-\sigma^2}\le 1$, and
the remaining power proportional to ($1-e^{-\sigma^2})$ is spread onto an infinite bandwidth. Later in this paper we will see that when using projection
receivers the spectral loss is unavoidable.

\section{System Model}\label{Sec:system}
Consider linear modulation for which we express the output of the continuous-time white phase noise channel as
\begin{align}
 Y(t) &= X(t) e^{j\Theta(t)} + W(t) \nonumber\\
  &= \sum_k A_k g(t-kT) e^{j\Theta(t)} + W(t) \nonumber\\
 &=\sum_k A_k g_k(t) e^{j\Theta(t)} + W(t),\label{eq:model}
\end{align}
where $\{A_k\}$ is a sequence of independent and identically distributed (iid) random variables, $T$ is the symbol period, $j$ is the imaginary unit, $g(t)$ is the pulse shaping filter 
such that $\{g_k(t)\}$ is a 
set of orthonormal functions of $L^2(\mathbb{R})$ with the property
\begin{equation}\label{eq:translation}
 \int_{-\infty}^\infty g_k(t)g_l^\star(t)\diff t = \left\{\begin{array}{ll}
                                                     1 & k=l \\
                                                     0 & k\ne l
                                                    \end{array}\right.
\end{equation}
where the superscript $^\star$ denotes complex conjugate,
and $W$ is a complex-valued circularly symmetric white Gaussian noise.
Let 
\begin{equation}\label{eq:basis}
  \{\phi_{nm}(t), n=0,1,\ldots; m\in\mathbb{Z}\}
\end{equation}
be a complete orthonormal basis of $L^2(\mathbb{R})$ with \mbox{$\phi_{nm}(t)=\phi_n(t-mT)$} and such that
\begin{equation}\label{eq:double}
 \int_{-\infty}^\infty \phi_{nm}(t)\phi_{n'm'}^\star(t)\diff t = \left\{\begin{array}{ll}
                                                     1 & n=n',m=m' \\
                                                     0 & \text{otherwise},
                                                    \end{array}\right.
\end{equation}
and
\begin{equation}\label{eq:proj_g_phi}
 \int_{-\infty}^\infty g_k(t)\phi_{nm}^\star(t)\diff t = \left\{\begin{array}{ll}
                                                     1 & n=0,m=k \\
                                                     0 & \text{otherwise}.
                                                    \end{array}\right.
\end{equation}
The basis functions in~(\ref{eq:basis}) are double indexed: the index $nm$ denotes the $n-$th basis function $\phi_n(t)$ translated by $mT$ in time, i.e., $\phi_n(t-mT)$.
The choice of using a double index for basis functions is because orthogonality is also obtained by translation, as shown in~(\ref{eq:double}).
Property~(\ref{eq:proj_g_phi}) states that the pulse shaping function $g(t)$ has a non-zero projection only in the first dimension, i.e., for $n=0$. This means that \mbox{$\phi_{0m}(t)=g_m(t)$}. Note that
assuming property~(\ref{eq:proj_g_phi}) comes without loss of generality, because we can always obtain all the other basis functions $\phi_n(t)$ for $n>0$ with an orthonormalization procedure.
\begin{exmp*}
 A complete orthonormal basis of $L^2(\mathbb{R})$ is the trigonometric system
\begin{equation}
 \left\{\phi_{nm}(t) = \phi_n(t-mT),n=0,1,\cdots; m\in \mathbb{Z}\right\},
\end{equation}
where
\begin{equation}
 \phi_n(t) = \frac{1}{\sqrt{T}}\exp\left(j\frac{2\pi}{T}nt\right)
\end{equation}
for $0\le t< T$ and $\phi_n(t)$ is zero elsewhere. The square pulse shaping filter
\begin{equation}
 g_0(t) = \left\{\begin{array}{ll}
                                                     1/\sqrt{T} & 0\le t < T \\
                                                     0 & \text{otherwise}
                                                    \end{array}\right.
\end{equation}
is such that~(\ref{eq:translation}) and~(\ref{eq:proj_g_phi}) are satisfied.
\end{exmp*}

The phase noise process $\Theta$ is defined as a stationary random process such that
\begin{equation}\label{eq:average}
 \expect{\left\langle g_k(t)e^{j\Theta(t)},\phi_{nm}(t)\right\rangle} \triangleq \mu_\Theta \cdot \left\{\begin{array}{ll}
                                                     1 & n=0,m=k \\
                                                     0 & \text{otherwise},
                                                    \end{array}\right.
\end{equation}
\begin{align}\label{eq:variance}
 &\expect{\left\langle g_k(t)e^{j\Theta(t)},\phi_{nm}(t)\right\rangle\cdot \left\langle g_{k'}(s)e^{-j\Theta(s)},\phi_{n'm'}(s)\right\rangle}  \nonumber\\
 &\triangleq |\mu_\Theta|^2\cdot \left\{\begin{array}{ll}
                                                     1 & n=n'=0,m=k,m'=k', \\
                                                     0 & \text{otherwise},
                                                    \end{array}\right.
\end{align}
where $\mu_\Theta = \expect{e^{j\Theta(t)}}$.
Note that~(\ref{eq:average}) and~(\ref{eq:variance}) give 
\begin{equation}\label{eq:variance2}
 \variance{\left\langle g_k(t)e^{j\Theta(t)},\phi_{nm}(t)\right\rangle} = 0
\end{equation}
for any choice of $k$, $n$, and $m$. In other words, we can view the projections $\left\langle g_k(t) e^{j\Theta(t)}, \phi_{nm}(t) \right\rangle$ as constants.

Defining the phase noise process through~(\ref{eq:average})-(\ref{eq:variance2}) is reasonable
if we interpret the model (\ref{eq:model}) as the limit of successive refinements of a related discrete-time model. 
For the discrete-time model, consider the finite interval $[-S,S]$ and time instants $t_i=iS/l$ for $i=-l,\ldots,l-1$.
The extension to the whole real axis is straightforward by letting $S\rightarrow\infty$.
Since the continuous-time AWGN $W(t)$ is defined only when using projection receivers,
also for the phase noise process it is desirable to preserve the statistical properties of its projections onto
the basis functions, when passing from the discrete-time to the continuous-time model. Mathematically speaking, it is useful to define
\begin{equation}\label{eq:definition}
 \left\langle g_k(t)e^{j\Theta(t)},\phi_{nm}(t)\right\rangle \triangleq \lim_{l\rightarrow\infty}  \frac{1}{2l}\sum_{i=-l}^{l-1} g_k(t_i) e^{j\Theta(t_i)} \phi_{nm}^\star(t_i),
\end{equation}
where the equality represents some kind of convergence.
The condition and the type of convergence is given by the following Lemma.
\begin{lemma*}\label{lm:projection} If $|g_k(t)\phi_{nm}(t)|$ is uniformly bounded in $t$, then the right-hand side of (\ref{eq:definition}), that is the projection 
of the phase noise process $g_k(t)e^{j\Theta(t)}$ onto the basis function $\phi_{nm}(t)$ of $L^2([-S,S])$, converges (almost surely) to
the expected value of the projection:
\begin{equation}\label{eq:lemma}
  \left\langle g_k(t)e^{j\Theta(t)},\phi_{nm}(t)\right\rangle \stackrel{a.s.}{=} \mu_\Theta \int_{-S}^S g_k(t)\phi_{nm}^\star(t)\diff t,
\end{equation}
for any value of $k$, $n$, and $m$.
\end{lemma*}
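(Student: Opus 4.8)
The plan is to recognise the right-hand side of~(\ref{eq:definition}) as a suitably normalised sum of uniformly bounded, pairwise uncorrelated random variables and to run a law-of-large-numbers argument. Fix $S<\infty$, write $t_i:=iS/l$, $Z_i:=g_k(t_i)\,\phi_{nm}^\star(t_i)\,e^{j\Theta(t_i)}$ and $S_l:=\frac{1}{2l}\sum_{i=-l}^{l-1}Z_i$, so that the claim is that $S_l$ converges almost surely to $\mu_\Theta\int_{-S}^S g_k(t)\phi_{nm}^\star(t)\diff t$. First I would split $S_l=\expect{S_l}+\bigl(S_l-\expect{S_l}\bigr)$. By stationarity $\expect{e^{j\Theta(t_i)}}=\mu_\Theta$, hence $\expect{S_l}=\mu_\Theta\cdot\frac{1}{2l}\sum_{i=-l}^{l-1}g_k(t_i)\phi_{nm}^\star(t_i)$ is purely deterministic, namely a Riemann sum which, by the uniform boundedness and Riemann integrability (e.g.\ piecewise continuity) of $g_k\phi_{nm}^\star$ on $[-S,S]$, converges to the target value $\mu_\Theta\int_{-S}^S g_k(t)\phi_{nm}^\star(t)\diff t$.

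For the fluctuation term I would invoke the operative consequences of $\Theta$ being white and stationary, namely $\expect{e^{j\Theta(t)}}\equiv\mu_\Theta$ and $\expect{e^{j\Theta(t)}e^{-j\Theta(s)}}=|\mu_\Theta|^2$ for $t\ne s$ (precisely the property behind the autocorrelation $R_{e^{j\Theta}}$ discussed in Sec.~\ref{Sec:previous}, and consistent with~(\ref{eq:average})--(\ref{eq:variance})). This makes the $Z_i$ pairwise uncorrelated with $\variance{Z_i}=(1-|\mu_\Theta|^2)\,|g_k(t_i)\phi_{nm}(t_i)|^2$, so with the hypothesised bound $|g_k(t)\phi_{nm}(t)|\le M$,
\begin{equation}
 \variance{S_l}=\frac{1}{4l^2}\sum_{i=-l}^{l-1}\variance{Z_i}\le\frac{(1-|\mu_\Theta|^2)M^2}{2l}\stackrel{l\to\infty}{\longrightarrow}0,
\end{equation}
which already gives mean-square convergence of $S_l$ to $\mu_\Theta\int_{-S}^S g_k\phi_{nm}^\star\diff t$. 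To upgrade this to almost-sure convergence I would pass to a fast subsequence, e.g.\ $l_p=2^p$: since $\sum_p\variance{S_{l_p}}<\infty$, Chebyshev's inequality together with the Borel--Cantelli lemma yields $S_{2^p}\to\mu_\Theta\int_{-S}^S g_k\phi_{nm}^\star\diff t$ almost surely.

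The hard part is the passage from a subsequence to the full sequence $l\to\infty$: the sampling grids $\{iS/l\}_i$ for consecutive $l$ share essentially no points, so there is no telescoping bound as in the classical strong law, and indeed $\expect{|S_{l+1}-S_l|^2}$ is only $O(1/l)$, which is not summable. I would handle this either by reading the limit in~(\ref{eq:definition}) along nested refinements $l=2^p$, where the dyadic grids embed in one another and $S_{2^{p+1}}=\tfrac12 S_{2^p}+\tfrac12 T_p$ with $T_p$ the analogous average over the newly inserted points, so that the increments $S_{2^{p+1}}-S_{2^p}=\tfrac12(T_p-S_{2^p})$ can be controlled directly and the a.s.\ limit along $(2^p)$ follows; or, if the unrestricted limit is insisted upon, by observing that mean-square convergence together with a.s.\ convergence along a subsequence already identifies the limit uniquely as~(\ref{eq:lemma}), which is all that is needed for~(\ref{eq:definition}) to be a consistent definition; a genuine full-sequence a.s.\ statement would require in addition either an independence hypothesis on $\Theta$ or a maximal inequality adapted to the varying grid.
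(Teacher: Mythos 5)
Your argument stops short of what the Lemma actually asserts, and the shortfall traces to a difference in hypotheses rather than technique: the paper's proof treats the samples $\Theta(t_i)$ as mutually \emph{independent} (a condition it acknowledges explicitly right after the proof: ``the proof of the Lemma requires a strong condition: independence among phase noise samples''), and then applies Kolmogorov's criterion for the strong law of large numbers for independent, non-identically distributed variables to the real and imaginary parts of $Z_i=g_k(t_i)\phi_{nm}^\star(t_i)e^{j\Theta(t_i)}$. The boundedness hypothesis gives $\sum_i \variance{Z_i}/i^2\le K\pi^2/6<\infty$ as in (\ref{eq:app3}), the SLLN yields almost-sure convergence of the empirical mean to the average of the means, and that average is a Riemann sum converging to $\mu_\Theta\int g_k(t)\phi_{nm}^\star(t)\diff t$, handled separately on $[0,S]$ and $[-S,0]$. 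You instead use only pairwise uncorrelatedness of the $e^{j\Theta(t_i)}$ (whiteness, via (\ref{eq:average})--(\ref{eq:variance})), which caps you at $\variance{S_l}=O(1/l)$; Chebyshev plus Borel--Cantelli then deliver almost-sure convergence only along fast subsequences, and you candidly flag that the full-sequence statement needs more. That ``more'' is exactly the independence the paper assumes, so relative to the Lemma as intended your proof has a genuine gap at the last step: the dyadic-refinement reading and the ``$L^2$ plus subsequence identifies the limit'' fallback each prove a strictly weaker statement than (\ref{eq:lemma}).

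That said, your structural objection is sharp and in fact applies to the paper's own write-up: since $t_i=iS/l$ changes with $l$, the summands form a triangular array, whereas Kolmogorov's criterion as cited governs the partial sums of one fixed sequence. Under independence this is easily repaired --- for bounded independent centered summands one has $\Exp\bigl[|S_l-\expect{S_l}|^4\bigr]=O(1/l^2)$ (or a Hoeffding bound), which is summable in $l$, and Borel--Cantelli then gives the full-sequence almost-sure limit; under mere uncorrelatedness no such summable bound exists, which is precisely why your route stalls. So the one concrete change needed is to import the independence assumption on $\{\Theta(t_i)\}$ that the paper uses and close the argument with an SLLN (or moment/concentration bound) valid for independent bounded triangular arrays, rather than Chebyshev on second moments alone.
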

\begin{proof} We show the convergence result for the real part of the projection; the procedure for the imaginary part is analogous.
 Consider the independent random variables 
 \begin{equation}
    Z_{i+1} = \Re\{g_k(t_i)\phi_{nm}^\star(t_i)e^{j\Theta(t_i)}\},\qquad i=0,\ldots,l-1,
 \end{equation}
 where $\Re\{x\}$ is the real part of complex number $x$,
 then Kolmogorov's criterion for the strong law of large numbers~\cite[Th. 1-3, pag. 238]{feller71} for $l\rightarrow\infty$ reads as
 \begin{align}
  \sum_{i=1}^\infty \frac{\variance{Z_i}}{i^2}&\le\sum_{i=1}^\infty \frac{\expect{\Re\{g_k(t_{i-1})\phi_{nm}^\star(t_{i-1}) e^{j\Theta(t_{i-1}) } \}^2} }{i^2} \nonumber\\ 
  &\le \sum_{i=1}^\infty \frac{|g_k(t_{i-1})\phi_{nm}(t_{i-1}) |^2 }{i^2} \nonumber\\
  &\le\frac{K\pi^2}{6}<\infty,\label{eq:app3}
 \end{align}
 where~(\ref{eq:app3}) follows by the boundedness condition $|g_k(t)\phi_{nm}(t)|^2\le K$ for any $t\in[-S,S]$,
 with $K<\infty$. The finiteness of~(\ref{eq:app3}) implies the strong law of large numbers for independent and non-identically distributed random variables:
 \begin{align}
  \frac{1}{l}\sum_{i=1}^{l} Z_i \stackrel{a.s.}{\longrightarrow} &\lim_{l\rightarrow\infty}\frac{1}{l}\sum_{i=1}^{l}\expect{Z_i}\nonumber\\
  &= \Re\left\{\mu_\Theta\int_0^S g_k(t)\phi_{nm}^\star(t)\diff t\right\}
 \end{align}
 where the last equality follows by the definitions of $\mu_\Theta$ and of the Riemann integral.
 
 Proving the convergence in the time interval $[-S,0]$ is analogous, and the two parts together prove the thesis.
\end{proof}
Result~(\ref{eq:lemma}) is in accordance 
with the definitions~(\ref{eq:average})-(\ref{eq:variance2}). The condition of boundedness of the product of the basis functions required by the Lemma is mild and can be
easily met, therefore we assume it is met. 

Note that the proof of the Lemma requires a strong condition: independence among phase noise samples $\Theta(t_i)$.
But this independence condition is not unrealistic, as phase noise samples can be uncorrelated and Gaussian distributed, and thus independent. 
This can be the case, e.g., with phase noise generated by cross-phase modulation caused by neighboring channels in 
multichannel fiber-optic communication systems~\cite{Essiambre2010, Goebel2011}.

\section{Sufficient Statistic}\label{Sec:sufficient}
The average mutual information, per unit time, between the input and the output of the continuous-time channel in~(\ref{eq:model})
is defined as~\cite[pag. 370]{Gallager68}
\begin{equation}
 I(X;Y) \triangleq \lim_{M\rightarrow\infty}\frac{1}{2M+1} \lim_{N\rightarrow\infty} I(X^{N,M};Y^{N,M}),
\end{equation}
where 
\begin{equation}
 X^{N,M} = (X_{0}^M,X_{1}^M,\ldots,X_N^{M}),
\end{equation}
and
\begin{equation}
X_n^M=(X_{n,-M},X_{n,-M+1},\ldots,X_{nM}), 
\end{equation}
where
\begin{align}
 X_{nm} &= \left\langle X(t), \phi_{nm}(t) \right\rangle\\
 &= \sum_k A_k \left\langle g_k(t), \phi_{nm}(t)\right\rangle= \left\{\begin{array}{ll}
                     A_m & n=0, \\
                     0 & n > 0.
                 \end{array}\right. \label{eq:proj_x}
\end{align}
In other words, $X^{N,M}$ is a vector that collects the \mbox{$(N+1)\cdot (2M+1)$} projections of process $X$
onto the basis functions $\phi_{nm}(t)$ for \mbox{$n=0,1,\ldots,N$} and \mbox{$m=-M,-M+1,\ldots,M$}. An analogous notation holds for process $Y$.

The projection of $Y$ onto $\phi_{nm}(t)$ is
\begin{align}
 Y_{nm} &= \left\langle Y(t), \phi_{nm}(t)\right\rangle \\
 &= \left\langle X(t)e^{j\Theta(t)}+W(t), \phi_{nm}(t)\right\rangle \\
 &= \sum_k A_k \left\langle g_k(t)e^{j\Theta(t)}, \phi_{nm}(t)\right\rangle + W_{nm}\\
 &\stackrel{a.s.}{=} \mu_\Theta \sum_k A_k \int_{-\infty}^\infty g_k(t)\phi_{nm}^\star(t)\diff t + W_{nm} \label{eq:proj_y}
\end{align}
where the $W_{nm}$'s are iid circularly-symmetric Gaussian variates, and
the last step follows by the Lemma in the previous section. By using~(\ref{eq:proj_g_phi}) in~(\ref{eq:proj_y}) we have
\begin{equation}\label{eq:proj_y2}
 Y_{nm} \stackrel{a.s.}{=} \left\{\begin{array}{ll}
                     \mu_\Theta A_m + W_{nm} & n=0, \\
                     W_{nm} & n> 0.
                 \end{array}\right.
\end{equation}

Since $Y_{nm}$ does not carry information about the input $A_m$ for $n>0$, and the sequences of random variables converge 
almost surely to their limits, we have
\begin{align}
 I(X;Y) &= \lim_{M\rightarrow\infty}\frac{1}{2M+1} I(X_0^M;Y_0^M) \\
 &= \lim_{M\rightarrow\infty}\frac{1}{2M+1} \sum_{m=-M}^M I(X_{0m};Y_{0m}) \label{eq:mutinf1}\\
 &=  I(X_{00};Y_{00}) = I(A_0; Y_{00})\label{eq:mutinf2}
\end{align}
where~(\ref{eq:mutinf1}) follows because the channel is memoryless by~(\ref{eq:proj_y2}), and~(\ref{eq:mutinf2}) follows
by stationarity and by~(\ref{eq:proj_x}). The result~(\ref{eq:mutinf2}) shows that a sufficient statistic for inferring $A_k$ given $Y$ is
\begin{equation}
 Y_{0k} = \left\langle Y(t), \phi_{0k}(t)\right\rangle = \left\langle Y(t), g_{k}(t)\right\rangle
\end{equation}
that is the output of the sampled filter that is matched to $g(t)$.

\section{Discussion}\label{Sec:discussion}
By virtue of~(\ref{eq:proj_y2}), at the output of the sampled matched filter the continuous-time white phase noise channel is equivalent to a
discrete-time AWGN channel
\begin{equation}
 Y_{0k} = \mu_\Theta A_k + W_{0k}
\end{equation}
so the effect of the spectral loss $\mu_\Theta$ is to reduce the SNR. In the limit of $\mu_\Theta\rightarrow 0$, reliable communication over this
channel is not possible. 

We conclude that the presence of rapidly varying continuous-time phase noise affects not only the information
encoded in the \emph{phase} of $X(t)$, but also the information encoded in its \emph{amplitude}. We further conclude that even a noncoherent 
receiver is affected by white phase noise and suffers from an SNR loss. This is true also for fiber-optic communication
systems based on \emph{direct detection}, e.g., when the front-end is a photodiode that performs photon counting, the numbers of photons being
proportional to the energy the photodiode harvests. The output of the photodiode is usually modeled as the modulus-square of the 
field at the input, so in principle the phase noise would not be an issue. But real photodiodes are bandlimited and should be modeled as being 
preceded by a bandpass filter that 
reduces the energy collected by the receiver, and the presence of these filters will cause SNR loss.

\section{Conclusion}\label{Sec:conclusion}
We have discussed the modeling of the continuous-time AWGN channel in the presence of white (memoryless) phase noise.
We have found that, under mild conditions, the sufficient statistic for this channel is the output of a baud-sampled matched filter,
and that the channel is equivalent to a discrete-time AWGN channel with an SNR penalty.

\section*{Acknowledgment}
L. Barletta was supported by Technische Universit\"{a}t M\"{u}nchen –- Institute for Advanced Study, funded by the German Excellence Initiative.
G. Kramer was supported by
an Alexander von Humboldt Professorship endowed by the
German Federal Ministry of Education and Research.

\bibliographystyle{IEEEtran}
\bibliography{refSuffStat}

\end{document}